\newtheorem{lemma}{Lemma}
\begin{document}
\title{Improving DTN Routing Performance Using Many-to-Many Communication: A Performance Modeling Study}
	
	\author{\IEEEauthorblockN{Giridhari Venkatadri, V.~Mahendran, and C.~Siva Ram Murthy}
\IEEEauthorblockA{Department of Computer Science and Engineering\\
Indian Institute of Technology Madras, India\\
Email: {giridhariv@gmail.com, mahendra@cse.iitm.ac.in, and murthy@iitm.ac.in}}}
	\maketitle
	
	\begin{abstract}
	Delay-Tolerant Networks~(DTNs) have emerged as an exciting research area with a number of useful applications. Most of these applications would benefit greatly by a reduction in the message delivery delay experienced in the network. The delay performance of DTNs is adversely affected by contention, especially severe in the presence of higher traffic rates and node densities. Many-to-Many (M2M) communication can handle this contention much better than traditional one-to-one communication employing CSMA. In this paper, for the first time, we analytically model the expected delivery delay of a DTN employing epidemic routing and M2M communication. The accuracy of our model is demonstrated by matching the analytical results against those from simulations. We also show using simulations that M2M communication significantly improves the delay performance (with respect to one-to-one CSMA) for high-contention scenarios. We believe our work will enable the effective application of M2M communication to reduce delivery delays in DTNs.

	\end{abstract}
	
	\section{Introduction}
        Delay-Tolerant Networks~(DTNs)~\cite{dtnarch},~\cite{dtnsurvey} are challenged networks where end-to-end connectivity does not always exist between nodes. One reason this may occur is that links get disrupted either due to interference or due to nodes moving out of range of each other. DTNs can be put to a variety of applications such as bringing Internet connectivity to villages, and reducing the pressure on cellular bandwidth by offloading delivery through WiFi and mobiles of people moving in the vicinity. Since an end-to-end path seldom exists between source and destination, most existing Internet protocols fail and instead, the store-carry-forward~\cite{dtnarch} protocol is used to transfer messages from source to destination through a number of intermediate relays. The lack of an end-to-end path means message delivery delays are large. Hence, lowering the message delivery delays would be critical in enhancing the performance of most of these applications.

	It has been shown that contention adversely affects the delay performance of DTNs~\cite{contentionaware}. This is because contention implies that transfer of a message is not always possible when one node meets another, owing to two of the following reasons:~$(i)$~the message contends with other messages in the buffer of the node to be chosen for transmission, and~$(ii)$~the node contends with other nodes in the vicinity for opportunities to communicate.

While contention has been recognized as an important factor that increases delivery delays, the problem of fighting contention and thereby reducing delivery delays in DTNs has not been adequately considered. For a lot of DTN application scenarios, it is possible for a number of nodes to come in the vicinity of each other. For example, people in a crowded public square, or villagers near a road where a vehicle is passing by. In these scenarios, as a result of contention, most nodes would not be able to transmit the messages they carry since existing access schemes such as Carrier Sense Multiple Access~(CSMA) are one-to-one communication schemes and would allow only one pair of nodes within the region of contention to communicate at any instant of time.

Many-to-Many~(M2M) communication~\cite{m2mtwc09},~\cite{m2minfocom} has been proposed to improve the capacity of Mobile Ad hoc NETworks~(MANETs) and enables each node in a bounded group of nodes to simultaneously communicate with each other node in the group. Hence when a number of nodes come in the vicinity of each other, this precious contact duration is used much better, with more nodes getting a chance to communicate. Hence, M2M communication can be used to battle the effects of contention and therefore reduce message delivery delays. M2M communication has become feasible since the state-of-the-art technology enables support for both access schemes (such as Code Division Multiple Access~(CDMA)) and location services such as Global Positioning System~(GPS)) on a single IC chip~\cite{m2mtwc09}.

In this paper, we analytically model the expected delivery delay of a DTN employing epidemic routing and M2M communication. Despite the fact that the analysis has been done for epidemic routing, the framework we develop is easily extended to other routing protocols. The accuracy of our model is demonstrated by matching the results from analysis against those from simulations. Also, simulations show that for high-contention scenarios, M2M communication significantly outperforms one-to-one communication employing CSMA in terms of delivery delay. 

\section{Related Work}
Initial analytical models developed for DTN routing performance study~\cite{pqepidemic},~\cite{spraynwait},~\cite{multicopy} worked under the assumption that whenever two nodes are in contact with each other, all messages could always be successfully transferred from one node to the other~($i.e.,$~they assumed both buffer capacity and bandwidth to be infinite). While papers such as~\cite{ode} have modeled DTN performance with bounded buffer capacity, they have assumed that infinite bandwidth is available and hence that there is no contention.
The motivation for not considering contention has been that DTNs are sparse networks and such sparsity yields negligible contention. However, this conjecture has been disproved with the help of simulations in works such as~\cite{spraynwait},~\cite{multicopy}. The authors in~\cite{contentionaware} show via simulations that, irrespective of whether the network is sparse or dense, the contention is substantial for high traffic rates; and also that the contention increases with an increase in network density. Realizing the importance of contention in the routing performance, the authors have attempted to include contention in the analysis of routing~\cite{contentionaware},~\cite{contentionaware-conf}. Their analysis assumes the one-to-one CSMA communication scheme.

In our paper, however, we propose that in scenarios where there are areas of high node density (\emph{i.e.,}~a number of nodes in the vicinity of each other), nodes must co-operate using an M2M communication scheme such as the one proposed in~\cite{m2mtwc09} and not compete as assumed by~\cite{contentionaware}. This would mean that instead of just one pair of nodes in such an area of higher node density communicating, we can have a larger set of nodes communicating with each other at the same time. This would significantly increase the probability that two nodes which come into contact with each other get to exchange messages, and thereby reduce the routing delay. The beneficial effect of using such an M2M communication scheme would be amplified in the presence of high traffic loads.

\section{Analytical Model}
In this section, we develop an analytical model for the expected delivery delay of a DTN employing M2M communication. We follow the delivery of a particular message from source node to the destination node, and model the evolution of the state of the system with respect to this message. We first describe the Medium Access Control~(MAC) protocol used for achieving M2M communication which logically divides the network into square cells (or) tiles. We then derive the mobility statistics for the Random Direction~(RD) mobility model, where the network is logically divided into square tiles as described in the MAC protocol. Using these statistics, we model the evolution of the network as a stochastic process and thereby derive the expected delivery delay.

\subsection{MAC Protocol for M2M Communication}\label{sec:MACProtocol}
We adopt the scheme proposed by Moraes et. al., in~\cite{m2mtwc09} for M2M communication in MANETs. While they have proposed this scheme to increase the capacity of MANETs, we adopt the scheme to reduce delivery delays in DTNs. For the sake of completeness, we briefly describe the scheme in this section. The network is logically divided into a number of square tiles. Each node, with the help of GPS, is aware of its current location (thereby the current tile it is in). GPS also helps in time synchronization among the nodes. Therefore, the communication in each tile is synchronized. In each tile, communication is performed in the form of sessions that have two phases. The first phase, called the discovery phase, is divided into slots. Each node in the grid that wants to communicate randomly picks one slot and broadcasts its presence to the other nodes in that slot, along with some other control plane information. The first bunch of nodes~(say $\alpha$-nodes) that successfully do so enter the second phase, which is the communication phase. The parameter $\alpha$ is chosen according to different factors such as the allowable receiver complexity. We assume there are sufficient slots to make the probability of collision negligible thereby ensuring that if there are lesser than $\alpha$ nodes in a particular cell, then all of them are chosen for communication with a high probability. Also, this means that when there are at least $\alpha$ nodes in a particular cell, the MAC scheme effectively chooses a random subset of $\alpha$-nodes from the set of nodes in that cell.
Each node that enters the communication phase is assigned a code on which it transmits data and a frequency on which it receives data. Using narrowband FDMA/CDMA, all $\alpha$ nodes are able to communicate with each other simultaneously ($i.e.,$~each node participating in the communication session can send up to $b_{BW}$ messages to every other node participating in the session, where $b_{BW}$ depends on the bandwidth available for M2M communication).
\begin{figure*}[th]
\begin{equation}\label{lengthyeqtn}
\begin{split}
  \overline{n} =  \int_{t=0}^{\infty}\int_{\theta=0}^{\frac{\pi}{2}}\int_{v=V_{min}}^{V_{max}}\int_{y=0}^{a}\int_{x=0}^{a}\biggl( 1 + \frac{vt\left(cos\theta + sin\theta \right)}{a} + \frac{x+y}{a}\biggr)\, 
\left(\frac{1}{a}\right)dx\left(\frac{1}{a}\right)dy\left(\frac{1}{V_{max}-V_{min}}\right)dv\\\left(\frac{1}{\frac{\pi}{2}}\right)d\theta \left(\frac{\overline{v}}{\overline{L}}\right)e^{-\left(\frac{\overline{v}}{\overline{L}}\right)t}dt
\end{split}
\end{equation}
\begin{center}
\line(1,0){510}
\end{center}
\end{figure*}

\subsection{Mobility Statistics}
\subsubsection{Intermeeting Time (IMT)}
The IMT of any two nodes is defined as the time elapsed between two successive meetings of the nodes. Since we logically tile the terrain to make the MAC easier, the IMT is expected to increase since some communication opportunities are lost as a result of tiling the terrain. Hence existing derivations~\cite{lambdard} do not hold as such. We adopt the methodology used in~\cite{methodology} to derive the distribution of the IMT, show it to be exponential and derive its mean. We define the area covered by a node while moving along a path as the total area that came under the node's coverage when it moved along that path. It must be noted that a node's coverage area at any point is the cell it is in at that point.

\begin{figure}[h]
  \centering
  \includegraphics[scale=0.55]{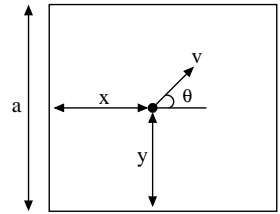} 
  \caption{Node performing RD motion}
  \label{rdmotion}
  \end{figure} 
\begin{lemma}
 The average area $\overline{A}$ covered during an epoch\footnote{Under RD mobility, a node chooses its mobility parameters such as speed and angle of direction, and travels with the chosen mobility parameters for a duration of time drawn from an exponential distribution (this duration of time is called an epoch). After that, the node halts for a random amount of time, following which it chooses a new set of values for its mobility parameters and the process repeats.}

 under RD mobility over a square-tiled terrain is given by $\overline{A} = 2a^{2} + \frac{4\overline{L}a}{\pi}$, where $a$ is the length of a side of a cell, and $\overline{L}$ is the average distance covered in an epoch of  the RD model.  
\end{lemma}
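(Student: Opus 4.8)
The plan is to reduce the statement to a tile-counting problem. Since a node's coverage at any instant is exactly the tile it occupies, the area swept during an epoch equals $a^{2}$ times the number of \emph{distinct} tiles the node visits in that epoch; writing $\overline{n}$ for the expected value of this number, it suffices to show $\overline{n} = 2 + \tfrac{4\overline{L}}{\pi a}$ and then multiply by $a^{2}$. The whole argument is thus: (i) derive the per-epoch integrand, i.e.\ Eq.~\eqref{lengthyeqtn}, by a geometric count; (ii) evaluate that integral.

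For step (i) I would condition on the epoch's speed $v$, direction $\theta$, and duration $t$, and on the node's offset $(x,y)$ inside its starting tile. By the rotational symmetry of RD mobility and of the square grid, and since $[0,a]^{2}$ is reflection-invariant, it is enough to treat $\theta\in[0,\tfrac{\pi}{2}]$; in this range the horizontal and vertical displacements $\Delta x = vt\cos\theta$ and $\Delta y = vt\sin\theta$ are nonnegative, so the trajectory is monotone in each coordinate. The key geometric fact is that a segment monotone in both coordinates that crosses $p$ vertical and $q$ horizontal gridlines (and, almost surely, no grid corner) passes through exactly $1+p+q$ tiles. Starting from horizontal offset $x\in[0,a]$ and moving right a distance $\Delta x$, the number of vertical gridlines crossed is $\lfloor (x+\Delta x)/a\rfloor \le (x+\Delta x)/a$, and symmetrically in the vertical direction; hence the number of tiles visited is at most
\[
1 + \frac{x+\Delta x}{a} + \frac{y+\Delta y}{a} \;=\; 1 + \frac{vt(\cos\theta+\sin\theta)}{a} + \frac{x+y}{a},
\]
which is precisely the bracketed integrand of Eq.~\eqref{lengthyeqtn}. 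Averaging over $x,y\sim U[0,a]$ (the stationary spatial law of an RD node is uniform over the terrain, hence uniform within each tile), $v\sim U[V_{min},V_{max}]$, $\theta\sim U[0,\tfrac{\pi}{2}]$, and an exponential epoch duration $t$ of rate $\overline{v}/\overline{L}$ (so the mean per-epoch distance is $\overline{L}$) gives Eq.~\eqref{lengthyeqtn} verbatim.

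For step (ii) I would evaluate Eq.~\eqref{lengthyeqtn} factor by factor. The $x,y$ integration of $1+\tfrac{x+y}{a}$ against the uniform density $1/a^{2}$ contributes $1+1=2$; the middle term is constant in $x,y$, and integrating it over $v$ and then over $t$ replaces $vt$ by $\overline{v}\cdot(\overline{L}/\overline{v})=\overline{L}$; finally $\tfrac{2}{\pi}\int_{0}^{\pi/2}(\cos\theta+\sin\theta)\,d\theta=\tfrac{2}{\pi}\cdot 2=\tfrac{4}{\pi}$. Hence $\overline{n}=2+\tfrac{4\overline{L}}{\pi a}$ and $\overline{A}=a^{2}\overline{n}=2a^{2}+\tfrac{4\overline{L}a}{\pi}$, as claimed.

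The main obstacle is step (i): one must justify the $1+p+q$ tile-count identity for monotone segments, dispose of the measure-zero corner-crossing cases, and recognize that replacing $\lfloor(x+\Delta x)/a\rfloor$ by $(x+\Delta x)/a$ is the modelling choice the derivation makes (it over-counts by the fractional offset, which is exactly what turns the expected $a^{2}$ into $2a^{2}$); one also leans on the uniformity of the in-tile offset and on the exponential epoch length of the RD model. Once Eq.~\eqref{lengthyeqtn} is accepted, the remaining computation is a routine separation of variables.
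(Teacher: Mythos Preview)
Your proposal is correct and follows essentially the same route as the paper: reduce to counting distinct tiles via the monotone-segment identity $1+p+q$, upper-bound the wall-crossing count by dropping the integer part, and integrate against the RD laws to obtain Eq.~\eqref{lengthyeqtn} and hence $\overline{n}=2+\tfrac{4\overline{L}}{\pi a}$. The only cosmetic difference is that the paper writes the wall count as $\eta_v=\lceil (vt\cos\theta-(a-x))/a\rceil$ and bounds $\lceil z\rceil\le z+1$, whereas you write it as $\lfloor(x+\Delta x)/a\rfloor$ and bound $\lfloor Z\rfloor\le Z$; these are the same quantity almost surely, and your explicit remark that this replacement is what produces the leading $2a^{2}$ (rather than $a^{2}$) is a point the paper leaves implicit.
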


\begin{proof}
 Consider a node starting an epoch within a cell at the coordinate $(x, y)$ as illustrated in Fig~\ref{rdmotion}. Since the RD model has a uniform stationary distribution, these coordinates $x$ and $y$ are drawn from a uniform distribution over $[0,a]$. Also, according to the RD model, the node chooses an angle $\theta$ drawn from a uniform distribution over $[0,2\pi]$, velocity $v$ drawn from a uniform distribution over $[V_{min},V_{max}]$, duration of epoch $t$ drawn from an exponential distribution with average $\frac{\overline{L}}{\overline{v}}$, and a halting time at the end of the epoch drawn from a uniform distribution with average $\overline{T_{s}}$. Let the number of vertical walls and horizontal walls crossed during this epoch be $\eta_{v}$ and $\eta_{h}$, respectively. We have for $\theta \in [0,\frac{\pi}{2}]$,

\begin{equation*}
 \eta_{v} = \left\lceil \frac{vt\cos(\theta) - (a - x)}{a} \right\rceil
\end{equation*}
and 
\begin{equation*}
 \eta_{h} = \left\lceil \frac{vt\sin(\theta) - (a - y)}{a} \right\rceil.
\end{equation*}

We now derive the area covered in the epoch, by counting the number of cells that are traversed in that epoch. For most starting configurations, the node does not pass through corner-points of cells during the epoch. Hence, it is reasonable to assume that every wall crossed during the epoch is intersected somewhere along the side and therefore that each intersection represents crossover of the node into a new cell.  Hence the number of cells $n$ covered in the epoch for $\theta \in \left[0,\frac{\pi}{2}\right]$ is given by,

\begin{eqnarray*}
n & = & 1 + \eta_{v} + \eta_{h}\\
  & \leq & 3 + \frac{vt\cos(\theta) - (a - x)}{a} +\\
  && \quad \frac{vt\sin(\theta) - (a - y)}{a}\\
  & = & 1 + vt\frac{\left(\cos(\theta) + \sin(\theta)\right)}{a} + \frac{x+y}{a}.    
\end{eqnarray*}

Because of symmetry, the average number of cells covered in an epoch for $\theta \in \left[0,\frac{\pi}{2}\right]$ is the same as the average for $\theta \in \left[0,2\pi\right]$. Hence, the average number of cells $\overline{n}$ covered in an epoch is given by~Eq.~\ref{lengthyeqtn}, which on solving gives,
\begin{equation}
 \overline{n} = 2 + \frac{4\overline{L}}{\pi a}.
\end{equation}
Hence we have
\begin{equation*}
 \overline{A}  =  \overline{n}a^{2}
 =  2a^{2} + \frac{4\overline{L}a}{\pi}.
\end{equation*}
\end{proof}
Having estimated the average area covered during each epoch, we can proceed as in~\cite{methodology} to show that the expected hitting time\footnote{The hitting time is the time taken for a node starting from the stationary distribution to hit a randomly chosen static target.} $\overline{T_{h}}$ is as follows:

\begin{equation}
 \overline{T_{h}} = \frac{N}{2a^{2}+\frac{4\overline{L}a}{\pi}}\left( \frac{\overline{L}}{\overline{v}} + \overline{T_{s}}\right)
\end{equation}
In a similar way, as done in~\cite{methodology}, we can show that the meeting time $\overline{T_{m}}$ is exponentially distributed with an average given by
\begin{equation}
 \overline{T_{m}} = \frac{\overline{T_{h}}}{p_{m}v_{rd} + 2(1-p_{m})}
\end{equation}
where $v_{rd}$ is a constant~($1.27$) for the RD model such that the expected relative speed between two nodes $\overline{v_{rel}}$ is given by
\begin{equation}
 \overline{v_{rel}} = v_{rd}\overline{v},
\end{equation}
and $p_{m}$ is the probability that a node is moving at any point in time given by,
\begin{equation}
 p_{m} = \frac{\overline{T}}{\overline{T}+\overline{T_{s}}}.
\end{equation}
Finally we can show that the intermeeting time is also approximately exponentially distributed such that its average $\overline{T_{im}}$ equals $\overline{T_{m}}$.

\subsubsection{Contact Time~(CT)}
The contact time between two nodes is defined as the time elapsed between when the two nodes come within communication range of each other and when they go out of the communication range of each other. In this section, we derive the expression for the expected contact time ($\overline{T_{con}}$). We assume that the probability of contact between two nodes lasting over multiple cells is negligible. When a contact occurs, two cases arise: either one node could be stationary or both nodes could be moving. Let the expected contact time in the first case be $\overline{T_{con}^{s}}$ and in the second case be $\overline{T_{con}^{m}}$.

\begin{lemma}
\begin{equation*}
 \overline{T_{con}} = (1 - p_{m})\overline{T_{con}^{s}} + p_{m}\overline{T_{con}^{m}} 
\end{equation*}
\end{lemma}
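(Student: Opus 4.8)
The plan is to prove this by the law of total expectation, conditioning on the mobility state (moving or halted) of the two participating nodes at the instant the contact begins. First I would observe that the three a priori configurations — both nodes halted, exactly one halted, or both moving — collapse to two: if both nodes are stationary their relative position does not change, so no new contact can ever begin, and this configuration therefore carries zero probability conditioned on the event ``a contact occurs''. Hence every contact falls into either the regime measured by $\overline{T_{con}^{s}}$ (one node stationary) or the regime measured by $\overline{T_{con}^{m}}$ (both nodes moving), and the task reduces to computing the two conditional probabilities.

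Next I would argue for the weights $1-p_{m}$ and $p_{m}$. Because the RD model is taken in its stationary regime, a node inspected at a contact instant is moving with probability $p_{m}=\overline{T}/(\overline{T}+\overline{T_{s}})$ and halted with probability $1-p_{m}$, and the two nodes' mobility states are taken to be independent. Tagging the node whose message delivery we follow and the node it meets, I would attribute the contact to the stationary case exactly when the tagged node is halted (so that the other node, which cannot also be halted, supplies the relative motion) and to the moving case otherwise. Under this tagging, together with the first-order approximations already adopted in this subsection — notably that a contact does not span more than one cell, and neglecting the speed-dependent size-biasing of contact occurrences — the effective probabilities of the two cases are $1-p_{m}$ and $p_{m}$ respectively.

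Assembling the pieces, the law of total expectation then gives
\[
 \overline{T_{con}} = \Pr(\text{one node stationary})\,\overline{T_{con}^{s}} + \Pr(\text{both moving})\,\overline{T_{con}^{m}} = (1-p_{m})\,\overline{T_{con}^{s}} + p_{m}\,\overline{T_{con}^{m}},
\]
which is the claimed identity. I expect the genuine obstacle to be the second step rather than the conditioning itself: a naive independence count would partition the relevant mass as $2p_{m}(1-p_{m})$ against $p_{m}^{2}$, and once contacts are weighted by the relative speed at which they are generated ($\overline{v}$ in the one-moving case versus $v_{rd}\overline{v}$ in the both-moving case) this does not reduce to $1-p_{m}$ and $p_{m}$ on the nose. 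The work, therefore, lies in making explicit the tagging convention and the consistency of the neglected corrections with the other approximations used throughout the contact- and meeting-time derivations, so that $1-p_{m}$ and $p_{m}$ are the appropriate effective weights.
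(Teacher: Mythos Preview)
Your overall strategy --- law of total expectation, discard the both-halted configuration because it cannot initiate a contact, then justify the weights $1-p_{m}$ and $p_{m}$ --- is exactly the paper's. The paper's proof is two sentences: given that one node is moving, the other is (independently) moving with probability $p_{m}$ and stationary with probability $1-p_{m}$; the formula follows.

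Where your proposal diverges is in the tagging you use to extract those weights, and that tagging does not work as stated. You tag ``the node whose message delivery we follow'' and declare the contact to be of type $\overline{T_{con}^{m}}$ whenever the tagged node is moving. But if the tagged node is moving and the \emph{other} node is halted, the contact is still a one-stationary contact and its duration is governed by $\overline{T_{con}^{s}}$, not $\overline{T_{con}^{m}}$. So your partition does not line up with the definitions of the two conditional means, and the weights $1-p_{m}$, $p_{m}$ you obtain are attached to the wrong events.

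The fix --- and this is what the paper's one-line argument is implicitly doing --- is to tag instead the node that \emph{initiates} the contact, i.e.\ the one that enters the cell. That node is moving by construction. The remaining node's state is then independent and equals ``moving'' with probability $p_{m}$ and ``halted'' with probability $1-p_{m}$, which now correctly labels the contact as type $\overline{T_{con}^{m}}$ or $\overline{T_{con}^{s}}$ respectively. With this choice the size-biasing worry you raise (the $2p_{m}(1-p_{m})$ versus $p_{m}^{2}$ count) disappears, because you are no longer conditioning on the symmetric event ``at least one is moving'' but on the state of a single identified node.
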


\begin{proof}
 The probability that both nodes are moving given that one node is moving is $p_{m}$ and the probability that one node is stationary given that one node is moving is $1-p_{m}$. The result follows from this.
\end{proof}

\begin{lemma}
\begin{equation*}
 \overline{T_{con}^{s}} = 2I_{aw} + I_{ow}
\end{equation*}
where
\begin{equation*}
I_{aw} = \frac{Ka^{2}}{2}log\left(\frac{V_{max}}{V_{min}}\right)\left[log\left(\sqrt{2} + 1\right) + (\sqrt{2} - 1)\right]
\end{equation*}
and
\begin{equation*}
I_{ow} = 2Ka^{2}log\left(\frac{V_{max}}{V_{min}}\right)\left[log\left(\sqrt{2} + 1\right) + (1 - \sqrt{2})\right]
\end{equation*}
for $K$ given by
\begin{equation*}
K = \frac{1}{a\pi (V_{max} - V_{min})} 
\end{equation*}

\end{lemma}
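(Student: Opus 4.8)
The plan is to reduce the statement to a purely geometric computation of the mean length of the chord that the moving node cuts out of the stationary node's cell. Under the stated assumption that a contact does not span more than one cell --- and, consistently with it, that the moving node crosses that cell within a single epoch without halting --- the contact lasts exactly the time the moving node needs to traverse the cell along a straight segment at its epoch speed $v$. If $\ell$ denotes the length of this segment, the contact duration is $\ell/v$; since the chord geometry does not depend on $v$, which is uniform on $[V_{min},V_{max}]$, we get $\overline{T_{con}^{s}}=\overline{\ell}\cdot\overline{1/v}$ with $\overline{1/v}=\frac{1}{V_{max}-V_{min}}\int_{V_{min}}^{V_{max}}\frac{dv}{v}=\frac{\log(V_{max}/V_{min})}{V_{max}-V_{min}}$. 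This already accounts for the $\log(V_{max}/V_{min})$ factor and for the $(V_{max}-V_{min})^{-1}$ hidden inside $K$, so all that remains is to compute $\overline{\ell}$ and to organise it as $2I_{aw}+I_{ow}$ once the $\overline{1/v}$ factor is reattached.

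For $\overline{\ell}$ I would proceed as in~\cite{methodology}. By the uniform stationary distribution of the RD model, the moving node enters the cell at a point that is uniform along the crossed side and with a heading that is uniform over the half-circle of directions pointing into the cell; by the symmetry of the square it suffices to fix the entry side (say the bottom one), take the entry abscissa $s$ uniform on $[0,a]$, and the heading $\theta$ uniform on $(0,\pi)$. For every $(s,\theta)$ the segment leaves the cell through exactly one of the three remaining sides, so the $(s,\theta)$-rectangle partitions into three regions: exit through the left side, exit through the right side, and exit through the top side. In the first two cases (the \emph{adjacent-wall} cases) the chord length is $s/|\cos\theta|$, respectively $(a-s)/\cos\theta$, and the two contributions coincide under the reflection $s\mapsto a-s$, $\theta\mapsto\pi-\theta$ --- this is the source of the factor $2$ multiplying $I_{aw}$. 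In the third case (the \emph{opposite-wall} case) the chord length is $a/\sin\theta$, and this contribution is $I_{ow}$. The boundaries of the three regions are the conditions that the segment actually reaches the side in question first, e.g.\ $s\tan\theta<a$ for the left side and $0\le s+a\cot\theta\le a$ for the top side, and these produce the $\pi/4$ breakpoints in $\theta$.

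The integrations are then routine: after performing the inner integral over $s$ (of $s$ in the adjacent-wall cases, of a constant in the opposite-wall case) each region reduces to a one-dimensional integral of $\sec\theta$, $\csc\theta$, or $\cos\theta/\sin^{2}\theta$ over a subinterval of $[0,\pi/2]$ with endpoints among $0$, $\pi/4$, $\pi/2$, whose antiderivatives $\log|\sec\theta+\tan\theta|$, $-\log|\csc\theta+\cot\theta|$, and $-1/\sin\theta$ evaluate at those endpoints to the $\log(\sqrt{2}+1)$ and $\sqrt{2}-1$ terms. Multiplying the adjacent-wall and opposite-wall contributions by $\overline{1/v}$ reproduces $I_{aw}$ and $I_{ow}$ exactly, and $\overline{T_{con}^{s}}=2I_{aw}+I_{ow}$ follows; one can check that the resulting $\overline{\ell}=\frac{a}{\pi}\bigl(3\log(\sqrt{2}+1)+1-\sqrt{2}\bigr)$ is a plausible value (a little below $\pi a/4$) for a chord through a square of side $a$.

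I expect the only delicate part to be the bookkeeping in the second step: correctly carving the $(s,\theta)$ domain into the three exit regions and tracking which defining inequality is binding on each piece, together with the (standard, \cite{methodology}-style) justification that the entry point and heading may be taken uniform rather than flux-weighted. Once the domain is pinned down, the remaining integration is mechanical.
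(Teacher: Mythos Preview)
Your approach is essentially the same as the paper's: both model the contact as the moving node's straight-line traversal of the cell, take entry point, speed, and entry angle to be uniform, and split the expectation according to whether the node exits through an adjacent wall (two symmetric contributions, hence the factor $2$) or the opposite wall. The paper in fact omits the explicit evaluation of $I_{aw}$ and $I_{ow}$ ``due to space constraints,'' so your write-up is more detailed than the original on the integration step; your flag that the uniform-angle (rather than flux-weighted) entry law deserves justification is a point the paper simply assumes without comment.
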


\begin{proof}
 When one node is stationary, the contact begins when the other node enters the cell and ends when the other node leaves (our assumption that contacts last only over a cell is especially valid in this case). Hence the expected contact time in this case $\overline{T_{con}^{s}}$ is given by the expected time elapsed between the moving node entering and leaving the cell. We assume the moving node enters through any particular wall of the cell (since the problem is symmetric w.r.t the different walls of the cell) with a uniform distribution on point of entry, velocity, and angle of entry. Since epoch lengths are assumed large, it is assumed that with high probability, the movement of the node through the cell is part of one epoch and hence a straight line. The node can leave either through one of the two adjacent walls or the opposite wall. $I_{aw}$ is the product of the expected time taken to leave through a particular adjacent wall with the probability that the node leaves through that adjacent wall. Similarly, $I_{ow}$ is the product of the expected time taken to leave through the opposite wall with the probability that the node leaves through the opposite wall. Hence,
\begin{eqnarray}
 \overline{T_{con}^{s}}  =  I_{aw} + I_{aw} + I_{ow}
  =  2I_{aw} + I_{ow}.
\end{eqnarray}
  
$I_{aw}$ and $I_{ow}$ can be derived using geometric arguments which we do not present here due to space constraints.
\end{proof}

Now we approximate the contact time distribution with one node stationary as an exponential distribution with expected value $\overline{T_{con}^{s}}$. Hence the contact time for both nodes moving is given by the minimum of the times spent by each node in the cell and is hence approximately exponential with expected value $\frac{\overline{T_{con}^{s}}}{2}$. Therefore 
\begin{equation*}
 \overline{T_{con}^{m}} = \frac{\overline{T_{con}^{s}}}{2}.
\end{equation*}

\section{Modeling Contention under M2M Communication}
In this section we develop a framework for modeling contention in a DTN employing M2M communication. We first derive the distribution for the number of neighbors of any node and then derive the expected number of messages in the buffer of a node. These results are then used to develop the contention model.

\subsection{Distribution of Number of Neighbors}\label{sec:NeighboursDistribution}
We have established that the IMT and CT are approximately exponential with means $\overline{T_{im}}$ and $\overline{T_{con}}$, respectively. We hence model the process of nodes arriving within communication range of a particular node ($i.e.,$~entering the same cell as the node), and moving out of communication range of the node as a discrete time Markov chain, where the states are given by the number of neighbors at the beginning of each communication session. We assume that the probability of having more than $N_{max}$ neighbors is negligible. We also assume that since the duration of a communication session~($\tau$) is small, the probability of more than one arrival or departure in the duration of a communication session is negligible. Hence the transition probability $P(n+1|n)$ of the Markov chain is the probability of one arrival and no departure and is given by, 

\begin{equation*}
 P(n+1|n) =  \left(\frac{\lambda_{n}^{eff} \tau}{1!}e^{-\lambda_{n}^{eff} \tau}\right)\left(1 - \frac{\mu_{n}^{eff} \tau}{1!}e^{-\mu_{n}^{eff} \tau}\right)
\end{equation*}
where
\begin{equation*}
 \lambda_{n}^{eff} = \frac{(n_{tot} - n)}{\overline{T_{im}}}
\end{equation*}
and 
\begin{equation*}
 \mu_{n}^{eff} = \frac{n}{\overline{T_{con}}}.
\end{equation*}
In a similar way, $P(n-1|n)$ is given by
\begin{equation*}
 P(n+1|n) =  \left(1 - \frac{\lambda_{n}^{eff} \tau}{1!}e^{-\lambda_{n}^{eff} \tau}\right)\left(\frac{\mu_{n}^{eff} \tau}{1!}e^{-\mu_{n}^{eff} \tau}\right)
\end{equation*}
and $P(n|n)$ is given by,
\begin{equation*}
 P(n|n) = 1 - P(n+1|n) - P(n-1|n).
\end{equation*}
The steady state distribution of this Markov chain gives us the probability distribution of neighbors. We assume in the subsequent sections that the probability that the number of neighbors $N_{nhb}$ equals $j$ is denoted by $v_{j}$. 
\subsection{Expected Number of Messages in Buffer}
In order to derive the expression for the expected number of messages in buffer ($E[B]$), we note that by Little's Law, we have,
\begin{equation}
 E[B] = \lambda E[W]
\end{equation}
where $\lambda$ is the total traffic arrival rate at each node (assumed Poisson) and $E[W]$ is the expected waiting time for a message in a node's buffer.
 
We first derive the expression for the arrival rate $\lambda$. We then derive $E[W]$ in the next section. $\lambda$ is given by
\begin{equation*}
 \lambda = \lambda_{gen} + \lambda_{rel}
\end{equation*}
where $\lambda_{gen}$ is the rate of generation of messages at that node and $\lambda_{rel}$ is the rate at which relay messages (messages for which this node is going to act as relay) arrive at the node. To calculate $\lambda_{rel}$, we assume that any given message spreads to nearly the entire network before being removed by the recovery mechanism. This is valid for the direct recovery mechanism we assume is used, which makes minimal effort at recovery. Hence, we assume that any message in the network is replicated at a node with a high probability before all its copies are erased from the network. Now, working under this assumption, we treat the network as a queue with messages generated at other nodes considered as arrivals (with an average rate $\lambda_{gen}(n_{tot} - 1)$), and messages delivered to a particular node under consideration as service completions (with an average rate $\lambda_{rel}$). Clearly, for the number of messages in the network to be stable, the service rate $\lambda_{rel}$ must be at least $\lambda_{gen}(n_{tot} - 1)$). We approximate the value of $\lambda_{rel}$ by this lower bound.

\subsubsection{Expected Waiting Time for Epidemic Routing}
The expected waiting time depends on several factors such as the chosen mobility model, routing protocol, and the recovery mechanism. In order to derive the expected waiting time for epidemic routing, we assume direct recovery is used, $i.e.,$~a node will only drop a message after meeting the destination and either transmitting the message to the destination itself or learning from the destination node that the destination node has already received the message. We also assume that the scheduler at each node works such that whenever the destination of some messages is encountered, those messages are given top priority. We also assume that if contact is established with the destination of some messages in the buffer, the bandwidth available is sufficient to transmit all the messages to the destination. This is valid since we assume symmetric traffic generation, $i.e,$~for any source node, there is an equal amount of traffic generated destined for every other node and hence a limited amount of traffic destined for any particular destination. Under these assumptions we consider a particular message in the buffer of a particular node (call it the current node) and derive its expected waiting time for epidemic routing. Let $p_{succ}$ be the probability of successfully communicating with the destination of the message given that the current node meets the destination~(successful communication may not happen because of contention with other neighboring nodes). The number of meetings with the destination before the current node is successfully able to communicate with it is geometrically distributed with mean $\frac{1}{p_{succ}}$ and the average time elapsed between meetings is $\overline{T_{im}}$. Hence, $E[W]$ is given by
\begin{equation}
E[W] = \frac{\overline{T_{im}}}{p_{succ}}
\end{equation}
where $p_{succ}$ is given by
\begin{equation*}
p_{succ} = 1 - \left(1 - p_{succ}^{'}\right)^{\frac{\overline{T_{con}}}{\tau}} 
\end{equation*}
where $\frac{\overline{T_{con}}}{\tau}$ is the average number of communication sessions for which the current node remains in contact with the destination, and $p_{succ}^{'}$ is the probability that the current node successfully communicates with the destination in a particular communication session. $p_{succ}^{'}$ is given by
\begin{equation*}
 p_{succ}^{'}  =  \sum_{j=1}^{\infty}P\left(N_{nhb}=j|N_{nhb}\geq 1\right) (p_{comm}^{j})^2
\end{equation*}
where $p_{comm}^{j}$ is the probability that a particular node gets to communicate in a session given that it has $j$ neighbors. $p_{comm}^{j}$ depends upon the MAC protocol used to resolve contention. As described in Section~\ref{sec:MACProtocol}, the MAC protocol effectively chooses all nodes in a cell if the number of nodes is atmost $\alpha$, else it chooses a random $\alpha$-sized subset of the nodes in the cell. Hence,
\begin{multline*}
 p_{succ}^{'} = \sum_{j=1}^{\infty}\frac{v_{j}}{\left(\sum_{k=1}^{\infty}v_{k}\right)}\min\left(\frac{\alpha}{j+1},1\right)\min\left(\frac{\alpha}{j+1},1\right)
\end{multline*}

\subsection{Contention Model}\label{sec:ContentionModel}
We model contention in terms of three different factors as follows:~$(i)$~the probability of increase in number of copies of a particular message from $i$ to $i^{'}$, namely $p_{i^{'},i}$,~$(ii)$ the probability of delivery to the destination of a particular message given that there are $i$ copies of the message in the network, namely $p_{d,i}$, and~$(iii)$~the expected time lapse between the number of copies reaching $i$ and either a rise in the number of copies or delivery to the destination, namely $E[D_{i}]$. We assume that with a high probability, in any communication session, only at most one of the $i$ nodes carrying a copy of the message attempts to transmit the message and hence $p_{i^{'},i}$ is zero for $i^{'} \geq i+\alpha$.

For $i< i^{'} < i+\alpha$, since we assume that at most one node transmits this message in any communication session,

\begin{multline*}
 p_{i^{'},i} = i\sum_{j=i^{'}-i}^{\alpha-1}P(N_{nhb}=j)\times\min\left(\frac{\alpha}{j+1},1\right)\times\\
\left(p_{tx}(i)\right)^{i^{'}-i}\times\left(1-p_{tx}(i)\right)^{j-(i^{'}-i)}
\end{multline*}
where $P(N_{nhb} = j)$ has already been derived in Section~\ref{sec:NeighboursDistribution}, and $p_{tx}(i)$ is the probability that a message is successfully transmitted to another node given they are in communication with each other and that there are $i$ copies in the network. The probability $p_{tx}(i)$ is given by,

\begin{equation*}
 p_{tx}(i) = \left(1-\frac{i-1}{n_{tot}-1}\right)\left(\frac{b_{BW}}{E[B]}\right).
\end{equation*}

To compute $p_{d,i}$, we assume that at most one of the $i$ nodes carrying a copy of the message meet the destination at a given point in time. Hence $p_{d,i}$ is the probability that any one of these nodes meets the destination and successfully transfers the message to the destination. And $p_{d,i}$ is given by,

\begin{equation*}
\begin{split}
 p_{d,i} =& \sum_{j=1}^{\infty}P(N_{nhb}=j)\times\frac{j}{n_{tot}-1}\times\\
 &\quad\min\left(\frac{\alpha}{j+1},1\right)\times
\min\left(\frac{\alpha}{j+1},1\right).
 \end{split}
\end{equation*}

\subsubsection{$E[D_{i}]$ Computation}\label{sec: ediDerivation}
To derive $E[D_{i}]$, we first estimate the average time elapsed before a particular node out of the $i$ nodes carrying the message spreads a copy/copies of the message. We jointly model the process of arrival and departure of neighbors and the attempted transfer of the message as a stochastic process with states $j = 0, 1, \cdots n_{max}$ representing the event that at the beginning of a communication session, the node has $j$ neighbors and still has not spread copies of the message. The stochastic process also has a state $C$ representing that the node has successfully spread one or more copies of the message and that therefore the number of message copies in the network has changed. This has been illustrated in Fig~\ref{nsystem}. 
\begin{figure}[h]
  \centering
  \includegraphics[scale=1]{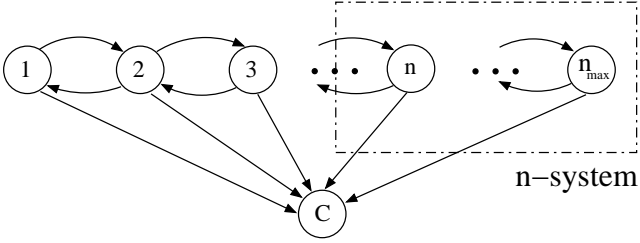} 
  \caption{Digraph representing the stochastic process used to compute $E[D_i]$}
  \label{nsystem}
  \end{figure}

We define the $n$-system as the subset of states $n, n+1 \cdots, n_{max}$. Let $E[t_{n}]$ be the expected dwell time in the state $n$, $p_{n,n+1}^{\overline{s}}(i)$ be the probability that the number of neighbors of the node under consideration increase from $n$ to $n+1$ in a communication session with no message copies spread, given there are $i$ copies of the message, $n$ neighbors for that particular node, and that some change in the state of the stochastic process happens at the end of the communication session. Also, let $p_{s(n+1),n}^{\overline{s}}(i)$ be the probability that a transition happens from the $(n+1)$-system to the state $n$ given that a transition from the $(n+1)$-system occurs. The following lemma gives the expected dwell time $E[t^{n}]$ of the process in the $n$-system.
\begin{lemma}
 \begin{equation*}
  E[t^{n}] = \frac{E[t_{n}] + p_{n,n+1}^{\overline{s}}(i)\times E[t^{n+1}]}{1 - p_{n,n+1}^{\overline{s}}(i)\times p_{s(n+1),n}^{\overline{s}}(i)}
 \end{equation*}
\end{lemma}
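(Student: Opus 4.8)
The plan is to exploit the nested (telescoping) structure of the $n$-systems together with the strong Markov property of the stochastic process depicted in Fig~\ref{nsystem}. The key structural observation is that the $n$-system $\{n, n+1, \dots, n_{max}\}$ can only be entered through its boundary state $n$: any transition that brings the process into the $n$-system from outside must originate from state $n-1$, and from $n-1$ the only state reachable inside the $n$-system is $n$ itself. Consequently a sojourn of the process in the $n$-system decomposes into a concatenation of \emph{visit blocks}, each of which begins with a (re)entry into state $n$, and the whole argument reduces to (i) computing the expected time accrued in one block and (ii) computing the probability that one block is followed by another.

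First I would analyse a single visit block. Upon entering state $n$ the process dwells there for expected time $E[t_n]$. Conditioned on the subsequent change of state, it moves up to state $n+1$ with probability $p_{n,n+1}^{\overline{s}}(i)$, and otherwise leaves the $n$-system (down to $n-1$, or to the copied state $C$). If it moves up, then by the definition of the $(n+1)$-system it spends expected time $E[t^{n+1}]$ inside $\{n+1,\dots,n_{max}\}$ and, on leaving that subsystem, returns to state $n$ with probability $p_{s(n+1),n}^{\overline{s}}(i)$ (and otherwise exits to $C$, leaving the $n$-system). Hence the expected time accrued in one block is $c := E[t_n] + p_{n,n+1}^{\overline{s}}(i)\, E[t^{n+1}]$, and a block is followed by a further block precisely when the process first goes up and then comes back, i.e. with probability $q := p_{n,n+1}^{\overline{s}}(i)\, p_{s(n+1),n}^{\overline{s}}(i)$.

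Next I would assemble the blocks. By the strong Markov property each re-entry to state $n$ restarts the process, so the number of blocks in a sojourn of the $n$-system is geometric: the $k$-th block occurs with probability $q^{k-1}$, and conditioned on occurring its expected time contribution is the constant $c$. Summing, $E[t^n] = \sum_{k\ge 1} q^{k-1} c = c/(1-q)$, which is exactly the claimed identity after substituting $c$ and $q$. As a sanity check I would note that at $n=n_{max}$ there is no $(n_{max}+1)$-system, so $p_{n_{max},n_{max}+1}^{\overline{s}}(i)=0$ and the formula collapses to $E[t^{n_{max}}]=E[t_{n_{max}}]$, which is the base case of the recursion.

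The step I expect to be the main obstacle is the bookkeeping of conditional independence: within a single block, whether the process returns to state $n$ and how much time it accrues are correlated, since both hinge on whether it ascends to the $(n+1)$-system, so one cannot simply multiply a mean time by a mean number of blocks as if they were independent. The resolution is to condition on each entry into state $n$ and invoke the strong Markov property, which makes successive blocks i.i.d.; linearity of expectation then gives $E[t^n] = \sum_{k\ge 1} P(\text{block } k \text{ occurs})\, E[\text{time in block } k \mid \text{it occurs}] = \sum_{k\ge 1} q^{k-1} c$. A secondary point to make rigorous is that the $E[t^{n+1}]$ invoked here is literally the same quantity the lemma defines for the $(n+1)$-system, which follows by applying the "enter only through the boundary state" observation one level up.
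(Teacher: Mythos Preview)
Your proposal is correct and follows essentially the same approach as the paper: the paper writes the self-referential equation $E[t^n] = E[t_n] + p_{n,n+1}^{\overline{s}}(i)\bigl(E[t^{n+1}] + p_{s(n+1),n}^{\overline{s}}(i)\, E[t^n]\bigr)$ and solves it algebraically, while you unfold the same recursion into an explicit geometric sum over visit blocks. Your handling of the strong Markov property and the conditional-independence bookkeeping is more careful than the paper's one-line justification, but the underlying argument is identical.
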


\begin{proof}
 We replace the dwell time distribution at each state $n$ by its expected value $E[t_{n}]$. Then the expected time spent in the $n$-system is given by,
\begin{equation*}
\begin{split}
 E[t^{n}] = & E[t_{n}] + \\
&p_{n,n+1}^{\overline{s}}(i)\left(E[t_{n+1}] + p_{s(n+1),n}^{\overline{s}}(i)\times \left(E[t_{n}] + \cdots\right)\right)
\end{split}
\end{equation*}
which gives us
\begin{equation*}
 E[t^{n}] = E[t_{n}] + p_{n,n+1}^{\overline{s}}(i)\left(E[t_{n+1}] + p_{s(n+1),n}^{\overline{s}}(i)\times E[t^{n}]\right)
\end{equation*}
which can be rearranged to give us the result.
\end{proof}
To terminate the recursion, we have
\begin{equation*}
 E[t^{n_{max}}] = E[t_{n_{max}}].
\end{equation*}

To compute $E[t_{n}]$ we first compute $p_{n,n}^{\overline{s}}(i)$ which is the probability that the number of neighbors remains $n$ at the end of a communication session with no copies of the message transferred, given that there are $i$ copies of the message in the network. For $n \leq \alpha - 1$ it is easy to derive
\begin{equation*}
 p_{n,n}^{\overline{s}}(i) = P(n|n)\times p_{n}^{\overline{s}}(i)
\end{equation*}
where $P(n|n)$ is as derived in the aforementioned Section~\ref{sec:NeighboursDistribution} and $p_{n}^{\overline{s}}(i)$ is the probability that there is no successful communication of the message given that there are $i$ copies of the message and $n$ neighbors. For $n \leq \alpha -1$, $p_{n}^{\overline{s}}(i)$ is given by,
\begin{equation*}
 p_{n}^{\overline{s}}(i) = \left(1 - p_{tx}(i)\right)^{n}.
\end{equation*}
 And for $n > \alpha - 1$,  $p_{n}^{\overline{s}}(i)$ is given by,
\begin{equation*}
 p_{n}^{\overline{s}}(i) = \left(1 - \frac{\alpha}{n} + \frac{\alpha}{n}\times \left(1 - p_{tx}(i)\right)^{n}\right).
\end{equation*}
The number of communication sessions for which the system stays in state $n$ is then a geometric random variable with probability of success $1 - p_{n,n}^{\overline{s}}(i)$. Hence, the expected dwell time in state $n$ is given by,
\begin{equation*}
 E[t_{n}] = \frac{\tau}{1 - p_{n,n}^{\overline{s}}(i)}.
\end{equation*}
Also, it must be noted that since the session time is small, the expected dwell times $E[t_{n}]$ can be approximated by continuous exponential distributions. Hence, if we considered $i$ nodes in state $n$ instead of one and computed the expected dwell time (the dwell time in this case would be the minimum of the dwell times of the $i$ nodes), the expected dwell time would approximately be $\frac{E[t_{n}]}{i}$.   

The transition probability $p_{n,n+1}^{\overline{s}}(i)$ is given by,
\begin{equation*}
 p_{n,n+1}^{\overline{s}}(i) = \frac{p_{n}^{\overline{s}}(i)\times P(n+1|n)}{1 - p_{n}^{\overline{s}}(i)\times P(n|n)}. 
\end{equation*}
To compute the other transition probability $p_{s(n+1),n}^{\overline{s}}(i)$, we assume that the dwell time within the $(n+1)$-system is sufficient for the stationary distribution within the $(n+1)$-system to be reached. Hence we have,
\begin{equation*}
 p_{s(n+1),n}^{\overline{s}}(i) = \frac{v_{n+1}\times P(n|n+1)}{v_{n+1}\times P(n|n+1) + \sum_{k = n+1}^{n_{max}}v_{k}\times (1 - p_{k}^{\overline{s}}(i))}.
\end{equation*}

The expected time elapsed before a particular node carrying the message spreads copies of the message is finally given by $E[t^{0}]$. However, $E[D_{i}]$ is given by the expected time elapsed before any one of the $i$ nodes carrying the message spreads copies of the message. Since each $E[t_{n}]$ is approximately inversely proportional to the number of nodes we consider for the stochastic process, and $E[t^{0}]$ is a linear combination of $E[t_{n}]$, we can approximately say

\begin{equation*}
 E[D_{i}] = \frac{E[t^{0}]}{i}.
\end{equation*}
   
\section{Routing Delay}
In this section we derive the routing delay for epidemic routing in a DTN which uses M2M communication. We model the evolution of the network with respect to a specific message, from the generation of the message to the spreading of copies of the message to finally the delivery of the message, as a stochastic process, represented as a digraph in Fig~\ref{delaystochprocess}. This stochastic process has a set of states $i={1,\cdots ,n_{tot}-1}$ where state $i$ corresponds to the presence of $i$ copies of the message in the network, and $n_{tot}$ refers to the total number of nodes in the network. The stochastic process also has an absorbing state $D$ which represents successful delivery of the message to the destination. The network spends an average time of $E[D_{i}]$ in state $i$ before moving on to one of the next possible states. Since we assume no message drops ($i.e.,$~buffers are adequately sized), the probability of transition $p_{i^{'},i}$ from state $i$ to $i^{'}$ is zero for $i^{'}<i$. 
\begin{figure}[h]
  \centering
  \includegraphics[scale=1]{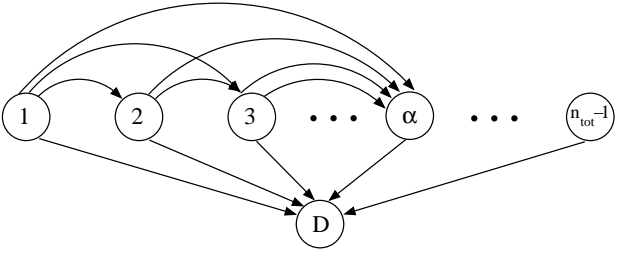} 
  \caption{Digraph representing the stochastic process used to compute the routing delay}
  \label{delaystochprocess}
  \end{figure}
When the network is in state $i$, we assume that the probability that more than one of the nodes carrying a particular message try to transmit the message at the same time is negligible. Hence, when the network is in state $i$, at most one of the $i$ nodes will transmit the message. This can be transmitted to at most $\alpha-1$ other nodes (since at most $\alpha$ nodes can participate in an M2M communication session). Hence, the network can evolve from state $i$ only to states $i+1,\cdots, i+\alpha-1$. We call the set of states reachable in a single transition from state $i$ as $Ch(i)$. This corresponds to the children of node $i$ in the digraph representing the stochastic process (assuming edges corresponding to zero-probability transitions are deleted). Similarly, a state $i$ can be entered only from states $i-1, \cdots, i-\alpha+1$. We call the set of states from which state $i$ is reachable in a single transition as $Pa(i)$. This corresponds to the parents of node $i$ in the digraph.

We then derive the expected delay $E[D]$ of epidemic routing by averaging the delay over all the possible paths $\Phi$ on the digraph starting from the starting state $1$ and ending in state $D$. It must be noted such a path corresponds to a sequence of states $s_{1}, s_{2}, \cdots, s_{m}$ where $s_{1}=1$ and $s_{m}=D$.

\begin{eqnarray*}
 E[D] & = & \sum_{\Phi} P\left(\Phi\right)D\left(\Phi\right)\\
      & = & \sum_{\left(1,\cdots,D\right)}\left(P(s_{2}|1)\cdots P(D|s_{m-1})\right)\times\\
      &&\quad\left(E[D_{1}] + \cdots E[D_{m-1}]\right).
\end{eqnarray*}
This expression can be re-written as follows:
\begin{multline*}
 E[D] = \sum_{s}\sum_{p\in Pa(s)}\sum_{c\in Ch(s)\cup\{D\}} \\
P(Path(1,p)).p_{s,p}.p_{c,s}.P(Path(c,D)).E[D_{s}]
\end{multline*}
where $P(Path(i,j))$ represents the total probability of all paths starting from state $i$ and ending at state $j$, given that the system is initially in state $i$. $P(Path(1,i))$ is given by the following recursion
\begin{multline*}
 P(Path(1,i)) = \sum_{p\in Pa(i)}P(Path(1,p)).p_{i,p}.
\end{multline*}
And this can be calculated for all states $i$ efficiently using dynamic programming. Similarly, $P(Path(i,D))$ is given by the recursion
\begin{multline*}
 P(Path(i,D)) = \sum_{c\in Ch(i)}\left(p_{c,i}.P(Path(c,D))\right) + p_{D,i}.
\end{multline*}
And this can also be calculated efficiently for all $i$ using dynamic programming.
\begin{table}
\centering
\renewcommand{\arraystretch}{1.5}
\caption{Simulation settings}
\label{settings}
\begin{tabular}{|l|c|}\hline
{\bf Parameters}&{\bf Values}\\\hline
Terrain size (in m$\times$m)&$100\times100$\\\hline
Grid size (in m)&$10\times 10$\\\hline
M2M communication session time (in seconds) &$0.1$\\\hline
M2M $\alpha$-value & $4$\\\hline
Message size (in Kb)&$1$\\\hline
Message generation rate &{$\frac{1}{5}$}\\\hline
Message TTL&$\infty$\\\hline
Buffer size &$\infty$\\\hline
Scheduling&FCFS\\\hline
Simulation time (in seconds)&$10000$\\\hline
\end{tabular}
\end{table}

\section{Performance Evaluation}
The simulations were done using the ONE simulator~\cite{onesim}, which we custom modified to simulate one-to-one CSMA and M2M communication schemes. The simulation settings are shown in Table~\ref{settings}. Each experiment was run for $10000$ simulation seconds. The system was verified to have reached the stability region within this time. As mentioned earlier, traffic generation is symmetric, $i.e.,$~every message generated is given a randomly chosen destination node. 

\begin{figure}
 \centering \subfloat{\label{2c}\includegraphics[scale=0.3]{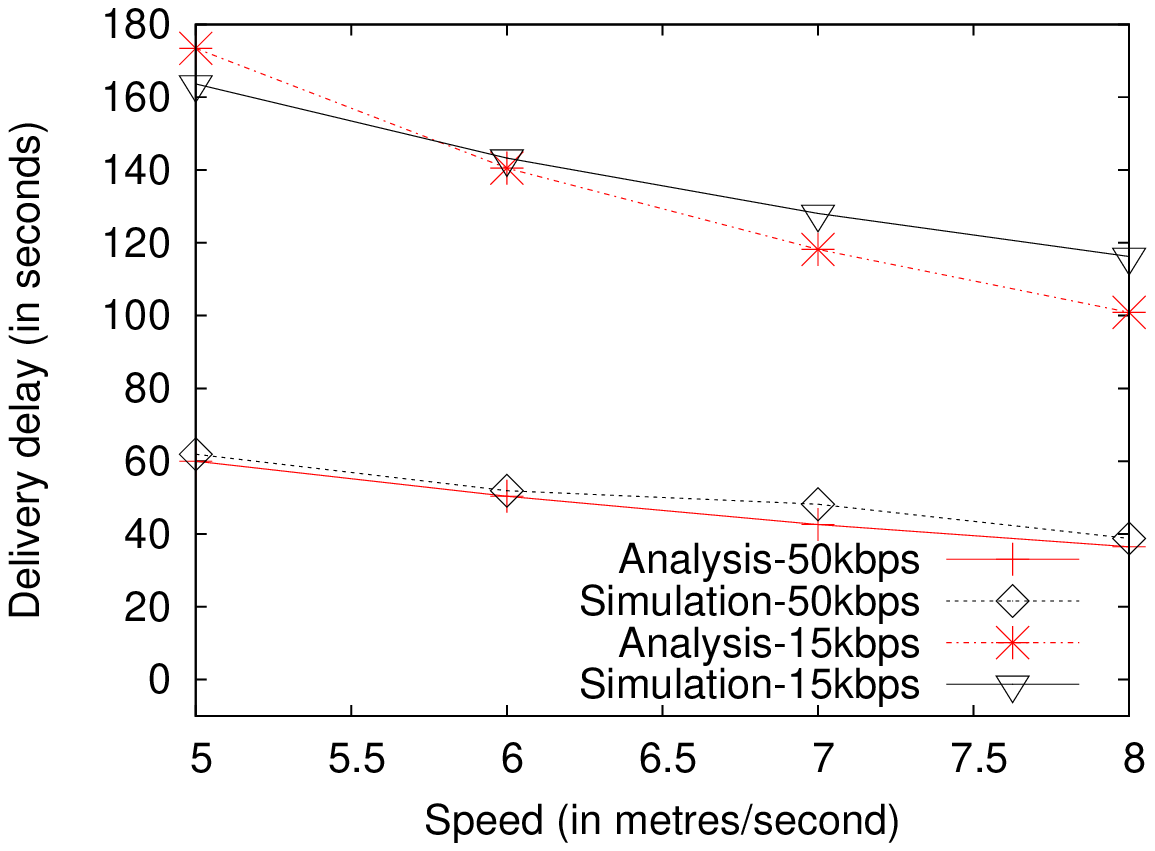}}
\subfloat{\label{4c}\includegraphics[scale=0.3]{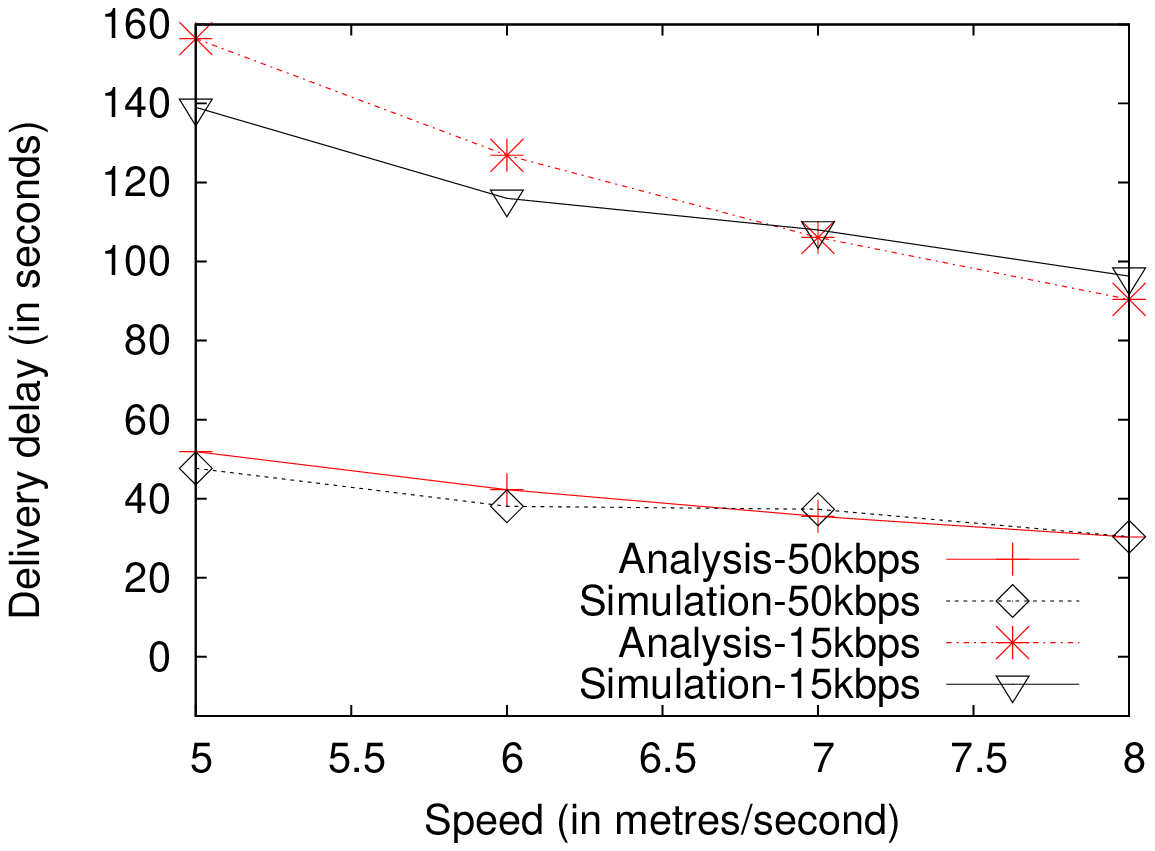}}
\caption{Delivery delay performance over two different number of nodes (namely, $10$ and $20$) and across different average speeds~($\overline{v}$), for two different link bandwidths}
\label{delayperformance} 
\end{figure}
\begin{figure*}
 \centering \subfloat{\label{2c}\includegraphics[scale=0.3]{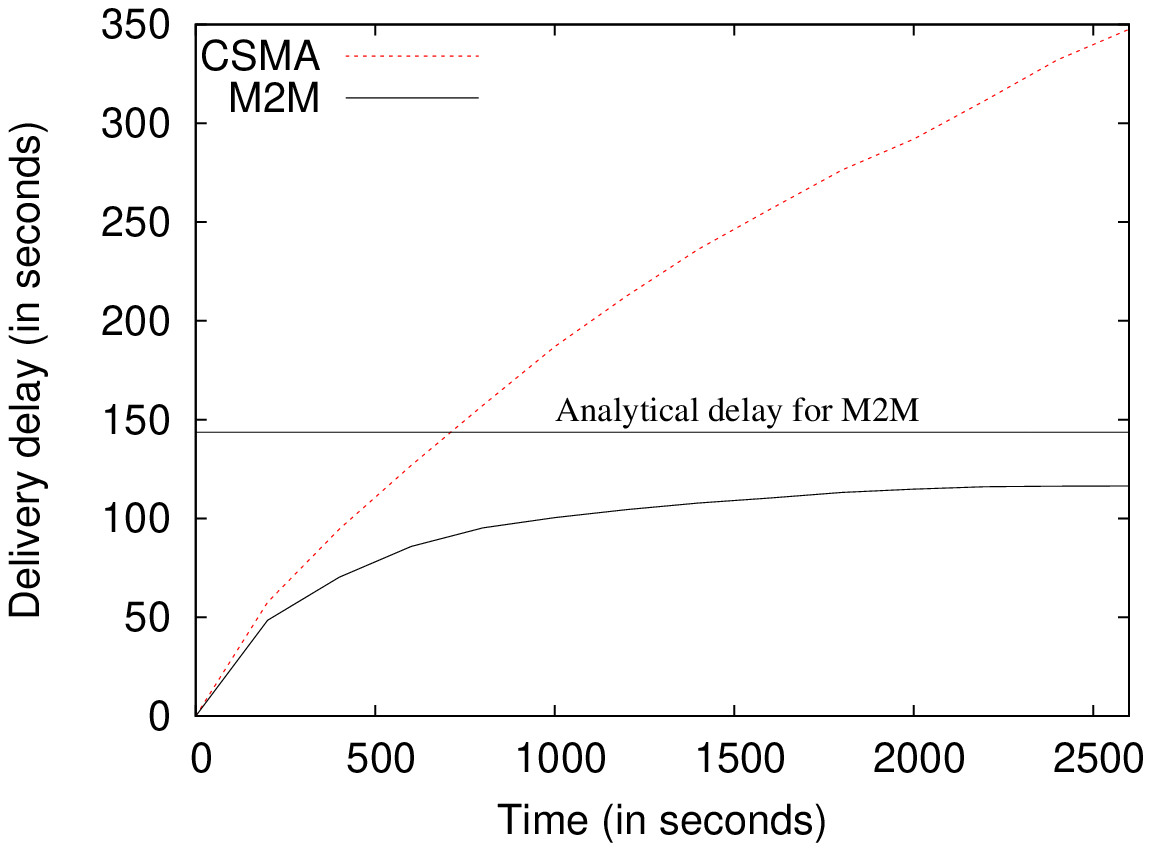}}
\subfloat{\label{4c}\includegraphics[scale=0.3]{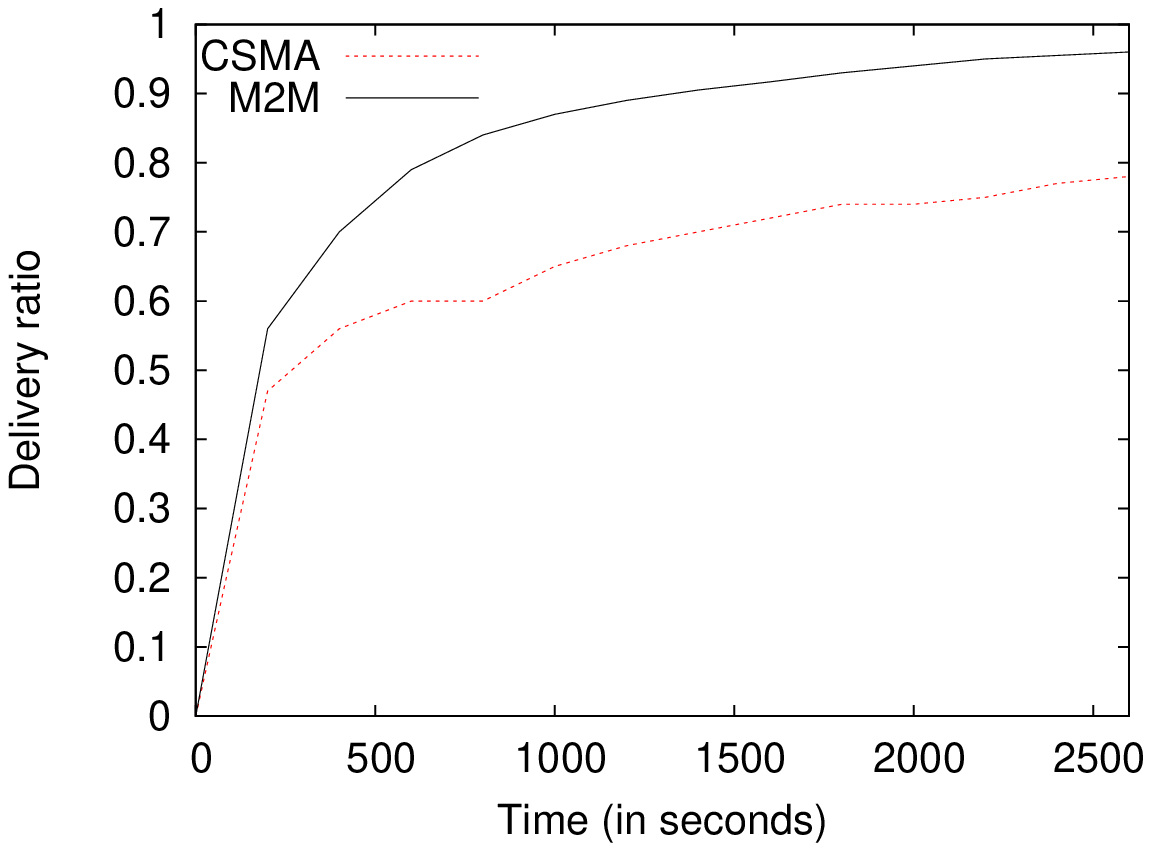}}
\subfloat{\label{2c}\includegraphics[scale=0.3]{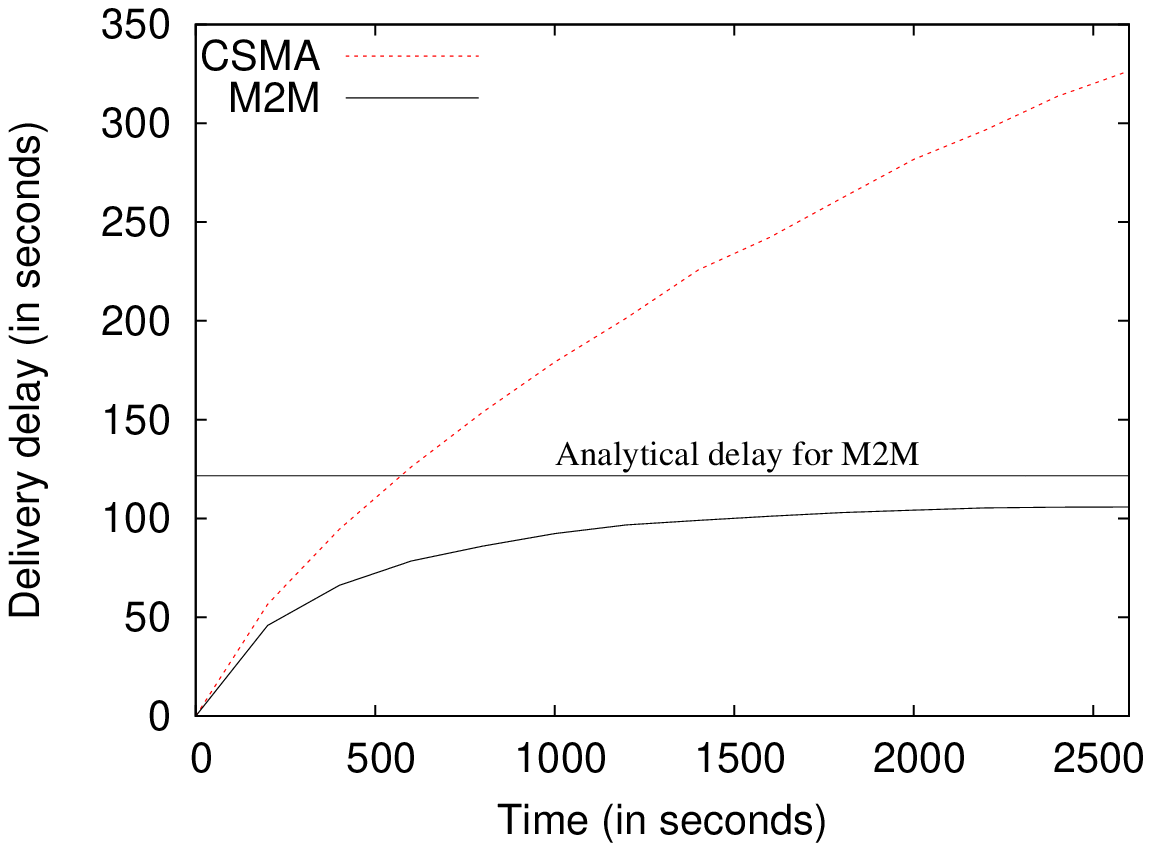}}
\subfloat{\label{4c}\includegraphics[scale=0.3]{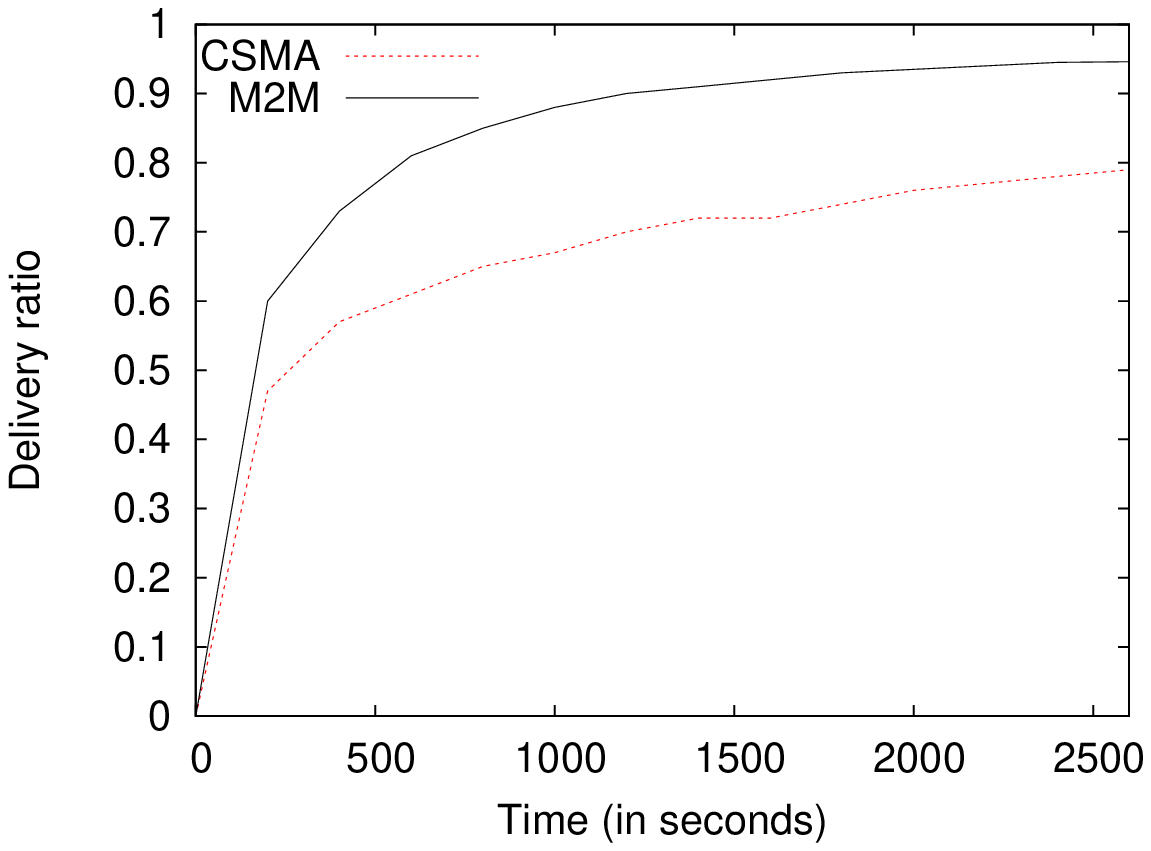}}

\caption{Routing performance of CSMA and M2M communication schemes for $100$ nodes with two different average node speeds~($\overline{v}$), namely $7$ and $8$ m/sec}
\label{100stability} 
\end{figure*}

Figure~\ref{delayperformance} compares the routing delay values obtained from analysis and simulations for $10$ and $20$ nodes, and for two different link bandwidths, namely $15$~kbps and $50$~kbps. For each configuration, the comparison is done over average node speeds~($\overline{v}$) ranging from $5$~m/sec to $8$~m/sec. These delay values were observed in the stability region of the experiments and the delivery ratios ($i.e.,$~the fraction of messages delivered successfully) were observed to be close to $0.9$. While we have made a number of approximations in developing our analytical model, we find that the delay values from analysis and simulations match, thereby validating our model and showing that the approximations were reasonable ones to make. As is expected, it is observed that with an increase in node speeds the routing delay decreases, as the mixing of nodes enable faster delivery of messages. Moreover, an approximately three-fold increase in the link bandwidth proportionally reduces the routing delay, which is intuitive as messages are pumped three times faster for the same contact time scenarios. 

Figure~\ref{100stability} compares the routing performance of one-to-one CSMA and M2M schemes. These experiments were run with a lower link bandwidth of $10$~kbps for M2M and $40$~kbps for one-to-one CSMA. CSMA is given a higher bandwidth since it uses the entire spectrum available while M2M communication divides the spectrum among the $\alpha$ ($=4$) nodes participating in a communication session using FDMA. The M2M scheme clearly outperforms the one-to-one CSMA scheme. We find that when the M2M scheme reaches its stability region, the CSMA scheme has not yet attained stability, and the differences between the two schemes are increasing with time. The fact that M2M has reached its stability region while CSMA has not, is supported by the high delivery ratio values~($i.e.,$~above $0.9$) for M2M and lower delivery ratio values~($i.e.,$~around $0.8$) for CSMA. Also our M2M delay values stabilize near the corresponding analytical delay values.

\section{Conclusion}
In this paper, we have shown for the first time how M2M communication can be used to combat contention, and thereby reduce the routing delay in a DTN. A theoretical model has been developed for the delivery delay of a DTN employing epidemic routing and M2M communication and has been validated against simulations. Using simulations, we have also shown that M2M communication significantly outperforms communication that uses one-to-one CSMA in terms of delivery delay. In the future, we plan to extend the model developed to work with more sophisticated mobility models. 

\section*{Acknowledgement}
This work was supported by the Department of Science and Technology~(DST), New Delhi, India.

% \bibliographystyle{IEEEtran}
% \bibliography{m2m}
% Generated by IEEEtran.bst, version: 1.12 (2007/01/11)

\end{document}